\documentclass[letterpaper, 11pt, conference]{ieeeconf}      

\IEEEoverridecommandlockouts                              
\usepackage{graphicx} 
\usepackage{epsfig} 
\usepackage{times} 
\usepackage{amsmath} 
\usepackage{amssymb}  
\usepackage{cleveref}
\usepackage{subcaption,graphicx}
\usepackage{url}
\newtheorem{theorem}{Theorem}

\newtheorem{assumption}[theorem]{Assumption}
\usepackage{color}
\usepackage{cite}
\usepackage{mathrsfs}
\usepackage{multirow}
\usepackage{caption}
\usepackage{subcaption}
\usepackage{cite}
\usepackage{soul}

\newtheorem{definition}[theorem]{Definition}

{ 
\newtheorem{remark}[theorem]{Remark}

}
\usepackage{wrapfig}
\usepackage{float}
\newcommand{\R}{\mathbb{R}}

\newcommand{\bi}{\begin{itemize}}
\newcommand{\ei}{\end{itemize}}
\newcommand{\bd}{\begin{displaymath}}
\newcommand{\ed}{\end{displaymath}}
\newcommand{\be}{\begin{eqnarray*}}
\newcommand{\ee}{\end{eqnarray*}}

\def\R{\mathbb{R}}

\setlength{\parindent}{0em}
\title{\LARGE \bf 
Data-Driven Optimal Control Using Perron-Frobenius Operator}
\author{Apurba Kumar Das, Bowen Huang, and Umesh Vaidya \\
\thanks{Financial support from the National Science Foundation grant CNS-1329915 and ECCS-1150405 is gratefully acknowledged.  The authors are with the Department of Electrical and Computer Engineering at Iowa State University. 
        {\tt\small ugvaidya@iastate.edu} }
        }

\begin{document}
\maketitle
\begin{abstract}
In this paper, we propose a data-driven approach for control of nonlinear dynamical systems. The proposed data-driven approach relies on transfer Koopman and Perron-Frobenius (P-F) operators for linear representation and control of such systems. Systematic model-based frameworks involving linear transfer P-F operator were proposed for almost everywhere stability analysis and control design of a nonlinear dynamical system in previous works \cite{VaidyaMehtaTAC, Vaidya_CLM,raghunathan2014optimal}. Lyapunov measure can be used as a tool to provide linear programming-based computational framework for stability analysis and almost everywhere stabilizing control design of a nonlinear system. In this paper, we show that those frameworks can be extended to a data-driven setting, where the finite dimensional approximation of linear transfer P-F operator and stabilizing feedback controller can be obtained from time-series data. We exploit the positivity and Markov property of these operators and their finite-dimensional approximation to provide {\it linear programming} based approach for designing an optimally stabilizing feedback controller.

\end{abstract}

\section{Introduction}
Stability analysis and stabilization of dynamical systems are two classical problems in control theory with applications ranging across various engineering discipline. Systematic tools exist for stability analysis and control design for linear systems, however, for nonlinear systems, this is still an active area of research. 
The introduction of linear transfer operator theoretic methods from dynamical system theory provides an opportunity to provide a systematic approach for the stability analysis and stabilization of nonlinear systems \cite{Lasota,Dellnitz00,Meic_model_reduction}. The transfer operator theoretic methods involving Perron-Frobenius (P-F) and Koopman operator provides for a linear representation of a nonlinear system by shifting the focus from the state space to the space of measures and functions. Linear nature of the transfer P-F operator was exploited to provide linear programming based systematic procedure for stability verification and optimal control design of nonlinear systems \cite{raghunathan2014optimal,das2017transfer}. In particular, Lyapunov measure and control Lyapunov measure were introduced for almost everywhere stability verification and design of stabilizing feedback controller for a nonlinear system. 

On the other hand, in this era of big data, there is new excitement towards developing data-driven methods for the analysis and control of complex dynamics \cite{DMD_schmitt,brunton2016discovering, kutz2016dynamic}. This excitement has lead to the renewed interest in the data-driven approximation of Koopman and P-F operators. These data-driven methods predominantly revolve around finite-dimensional approximation of Koopman operator, dual to transfer P-F operator \cite{rowley2009spectral,EDMD_williams,klus2015numerical,Umesh_NSDMD}. The Koopman operator is better suited for data-driven approximation compared to transfer P-F operator. Spectral analysis of Koopman operator and its finite dimensional approximation constructed from time-series data has been successfully applied to address analysis problems in several applications \cite{mezic_koopmanism,susuki2011nonlinear,surana_observer,MVCDC05}. 
There has also been attempt to extend their applicability for control design for nonlinear systems \cite{kaiser2017data,peitz2017koopman}. However, they do not exploit the real potential and linear nature of Koopman operator for control design. 
None of those provide a systematic linear programming-based approach for the design of controllers for nonlinear systems.
The main contribution of this paper is to show that systematic data-driven linear methods can be developed for optimal controller design of nonlinear system exploiting the true potential of the linear operator theoretic framework.

This main contribution towards developing systematic data-driven control design for a nonlinear system is made possible by utilizing not only the linearity but also positivity, Markov property, and duality between Koopman and P-F operators. In particular Naturally Structured Dynamic Mode Decomposition (NSDMD) algorithm provides a data-driven approximation of Koopman and P-F operators and preserves positivity and Markov properties of these operators \cite{Umesh_NSDMD}. The main contribution of our paper is to show that the NSDMD algorithm can be combined with systematic model-based transfer P-F operator approach to provide a data-driven linear programming-based method for optimal control of a nonlinear system.

\section{Preliminaries: Lyapunov measure and optimal stabilization }\label{section_lymeas} 
In this section we provide brief overview of the application of linear transfer P-F operator framework for almost everywhere stability analysis and optimal stabilization of nonlinear system using Lyapunov measure \cite{VaidyaMehtaTAC, Vaidya_CLM,raghunathan2014optimal}. 

Consider the discrete-time dynamical systems of the form,
\begin{equation}
x_{n+1}=F(x_n),\label{system}
\end{equation}
where $F: X\rightarrow X$ is assumed to be continuous with $X\subset \mathbb{R}^q$, a compact set. We denote ${\cal B}(X)$ as the Borel-$\sigma$ algebra on $X$ and ${\cal M}(X)$ as the vector space of a real valued measure on ${\cal B}(X)$. The mapping $F$ is assumed to be nonsingular with respect to the Lebesgue measure $\ell$, i.e., $\ell(F^{-1}(B))=0$, for all sets $B\in {\cal B}(X)$, such that $\ell(B)=0$. In this paper, we are interested in data-driven optimal stabilization of an attractor set defined as follows:

\begin{definition}[Attractor set]  A set ${\cal A}\subset X$ is said to be forward invariant under $F$, if $F({\cal A})={\cal A}$. A
closed forward invariant set $\cal A$ is said to be an attractor set, if there exists a neighborhood $V\subset X$ of $\cal A$, such that $\omega(x)\subset {\cal A}$ for all $x\in V$, where $\omega(x)$ is the $\omega$ limit set of $x$.
\end{definition}

 \begin{remark}\label{remark_nbd}We will use the notation $U(\epsilon)$ to denote the $\epsilon>0$ neighborhood of the attractor set $\cal A$ and $m\in {\cal M}(X)$, a finite measure absolutely continuous with respect to Lebesgue.
 \end{remark}

\begin{definition}[a.e. stable with geometric decay]\label{stable_a.e.}
The attractor set ${\cal A}\subset X$ for a dynamical system (\ref{system}) is said to be almost everywhere (a.e.) stable with geometric decay with respect to some finite measure $m\in {\cal M}(X)$, if given any $\epsilon>0$, there exists $M(\epsilon)<\infty$ and $\beta<1$, such that
$m\{x\in {\cal A}^c : F^{n}(x)\in X\setminus U(\epsilon)\}<M(\epsilon) \beta^n$.
\end{definition}

The above set-theoretic notion of a.e. stability was introduced and verified by using the
linear transfer operator framework \cite{VaidyaMehtaTAC}. For the discrete time dynamical system (\ref{system}), the linear transfer Perron Frobenius (P-F) operator denoted by $\mathbb{P}_F: {\cal M}(X)\rightarrow {\cal M}(X)$ is given by,
\begin{equation}
[{\mathbb P}_F \mu](B)=\int_{X}\chi_{B}(F(x))d\mu(x)=\mu(F^{-1}(B)),
\end{equation}
where $\chi_B(x)$ is the indicator function supported on the set
$B\in {\cal B}(X)$ and $F^{-1}(B)$ is the inverse image of set $B$ \cite{Lasota}. We define a sub-stochastic operator as a
restriction of the P-F operator on the complement of the attractor
set as follows:
\begin{equation}
 [{\mathbb P}^1_F\mu](B):=\int_{{\cal A}^c}\chi_{B}(F(x))d\mu(x)\label{rest_PF},
\end{equation}
for any set $B\in {\cal B}({\cal A}^c)$ and $\mu\in {\cal M}({\cal A}^c)$.
The condition for the a.e. stability of an attractor set
$\cal A$ with respect to some finite measure $m$ is defined in terms of
the existence of the {\it Lyapunov measure} $\bar \mu$, defined as follows \cite{VaidyaMehtaTAC}.
\begin{definition}[Lyapunov measure]\label{def_Lya_measure}
The Lyapunov measure is defined as any non-negative measure $\bar
\mu$, finite outside
$U(\epsilon)$ (see Remark \ref{remark_nbd}),  and satisfies the following
inequality, $[{\mathbb P}_F^1 \bar \mu ](B)<\gamma^{-1} \bar \mu(B)$, for some $\gamma \geq1$ and all sets $B\in {\cal B}( X\setminus
U(\epsilon))$, such that $m(B)>0$.
\end{definition}

The following theorem provides the
condition for a.e. stability with geometric decay \cite{Vaidya_converse}.
\begin{theorem} \label{theorem_converse} An attractor set $\cal A$
for the dynamical system (\ref{system}) is a.e. stable with
geometric decay with respect to finite measure $m$, if and only if for all $\epsilon>0$ there exists
a non-negative measure $\bar \mu$, which is finite on
${\cal B}(X\setminus U(\epsilon))$  and satisfies
\begin{equation}
\gamma[ {\mathbb P}_F^1\bar \mu](B)-\bar \mu(B)=-m(B)\label{LME},
\end{equation} 
for all measurable sets $B\subset X\setminus U(\epsilon)$ and for some $\gamma>1$ and where  the geometric decay rate is given by $\beta\leq \frac{1}{\gamma}<1$
\end{theorem}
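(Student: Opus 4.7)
The plan is to recognize equation (\ref{LME}) as a resolvent-type identity $\bar\mu = m + \gamma\, \mathbb{P}_F^1 \bar\mu$, which formally suggests the Neumann series $\bar\mu = \sum_{k=0}^\infty \gamma^k (\mathbb{P}_F^1)^k m$. Both directions will hinge on this representation together with the iterate identity
\[
[(\mathbb{P}_F^1)^n m](B) \;=\; m\{x \in {\cal A}^c : F^k(x)\in {\cal A}^c\text{ for } k<n,\ F^n(x)\in B\},
\]
which for $B\in {\cal B}(X\setminus U(\epsilon))$ simplifies to $m\{x \in {\cal A}^c : F^n(x)\in B\}$ thanks to the forward invariance $F({\cal A})={\cal A}$: once $F^n(x)\notin {\cal A}$, no previous iterate could have been in ${\cal A}$.

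For the sufficiency direction, I would assume a non-negative $\bar\mu$, finite on ${\cal B}(X\setminus U(\epsilon))$, satisfying (\ref{LME}) with some $\gamma>1$. Iterating $\bar\mu = m + \gamma\, \mathbb{P}_F^1 \bar\mu$ yields, for every $n\ge 1$,
\[
\bar\mu \;=\; \sum_{k=0}^{n-1}\gamma^k (\mathbb{P}_F^1)^k m \;+\; \gamma^n (\mathbb{P}_F^1)^n \bar\mu.
\]
Because $\mathbb{P}_F^1$ preserves non-negativity, every summand and the remainder are non-negative measures; hence $\gamma^n [(\mathbb{P}_F^1)^n m](B) \le \bar\mu(B)$ uniformly in $n$ for every $B\in {\cal B}(X\setminus U(\epsilon))$. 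Evaluating at $B=X\setminus U(\epsilon)$ and applying the iterate identity gives
\[
m\{x\in {\cal A}^c : F^n(x)\in X\setminus U(\epsilon)\} \;\le\; \bar\mu(X\setminus U(\epsilon))\, \gamma^{-n},
\]
which is exactly geometric decay with $M(\epsilon) = \bar\mu(X\setminus U(\epsilon))$ and $\beta = 1/\gamma < 1$.

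For the necessity direction, I would start from a.e. stability with decay rate $\beta<1$, pick any $\gamma \in (1,1/\beta)$, and define $\bar\mu(B) := \sum_{k=0}^{\infty}\gamma^k [(\mathbb{P}_F^1)^k m](B)$ on ${\cal B}(X\setminus U(\epsilon))$. The stability bound forces $[(\mathbb{P}_F^1)^k m](B)\le M(\epsilon)\beta^k$, so the series converges absolutely and $\bar\mu$ is a finite non-negative measure on $X\setminus U(\epsilon)$. A term-by-term shift then gives
\[
\gamma\, \mathbb{P}_F^1 \bar\mu \;=\; \sum_{k=0}^{\infty}\gamma^{k+1}(\mathbb{P}_F^1)^{k+1}m \;=\; \bar\mu - m,
\]
which is (\ref{LME}).

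The main obstacle I expect is not algebraic but measure-theoretic bookkeeping: justifying the iterate identity via forward invariance (so that the sub-stochastic restriction does not lose mass beyond the single cut to ${\cal A}^c$), and ensuring that the choice $\gamma\in(1,1/\beta)$ is always available so the Neumann series converges. Once these observations are in place, the proof reduces to clean monotone-series manipulations, with the non-negativity and Markov structure of $\mathbb{P}_F^1$ doing all of the analytic work.
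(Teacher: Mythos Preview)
Your proposal is correct. Note, however, that the paper does not actually prove this theorem: its entire proof reads ``We refer readers to Theorem~\ref{theorem_converse} from \cite{Vaidya_converse} for the proof,'' so there is no in-paper argument to compare against. The Neumann-series construction you outline, $\bar\mu=\sum_{k\ge 0}\gamma^k(\mathbb{P}_F^1)^k m$, together with the forward-invariance reduction $[(\mathbb{P}_F^1)^n m](B)=m\{x\in{\cal A}^c:F^n(x)\in B\}$, is the standard route to such converse Lyapunov-measure results.
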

\begin{proof}
We refer readers to Theorem \ref{theorem_converse} from \cite{Vaidya_converse} for the proof.

\end{proof}

\subsection{Lyapunov measure for stabilization}

We consider the stabilization of dynamical systems of the form $x_{n+1}=T(x_n,u_n)$,
where $x_n\in X\subset \mathbb{R}^q$ and $u_n \in U\subset
\mathbb{R}^d$ are respectively the states and the control inputs. Both $X$ and $U$ are assumed  compact. The
objective is to design a feedback controller $u_n=K(x_n)$, to
stabilize the attractor set $\cal A$. The stabilization problem is
solved using the Lyapunov measure by extending the P-F operator
formalism to the control dynamical system \cite{Vaidya_CLM}. We define
the feedback control mapping $C: X\rightarrow Y:= X\times U$ as
$C(x)=(x,K(x))$.
We denote ${\cal B}(Y)$ as the Borel-$\sigma$ algebra on $Y$ and ${\cal M}(Y)$ as the vector space of real valued measures on ${\cal B}(Y)$. For any $\mu \in {\cal M}(X)$, the control mapping $C$  can be used to define a measure $\theta\in {\cal M}(Y)$, as follows:
\begin{eqnarray}
&\theta(D):=[\mathbb P_C \mu](D)=\mu (C^{-1}(D))\nonumber\\
&[\mathbb P_{C^{-1}} \theta](B):=\mu(B)=\theta (C(B))\label{ess},
\end{eqnarray}
for all sets $D\in {\cal B}(Y)$ and $B\in {\cal B}(X)$. Since $C$ is an injective function with $\theta$ satisfying (\ref{ess}),  it follows from the theorem on disintegration of measure \cite{disintegration} (Theorem 5.8).  There exists a unique disintegration $\theta_x$ of the measure $\theta$  for $\mu$ and almost all $x\in X$, such that $
\int_Y f(y)d\theta(y)=\int_X\int_{C(x)} f(y)d\theta_x(y)d\mu(x)$,
for any Borel-measurable function $f:Y\to \mathbb R$.


This disintegration of $\theta$ measure allows us to write the P-F operator for the composition $T\circ C: X\rightarrow X$
as a product of ${\mathbb P}_T$ and ${\mathbb P}_C$ as follows:
\begin{eqnarray*}
&&[{\mathbb P}_{T\circ C} \mu](B)=\int_Y \chi_B(T(y))d [\mathbb{P}_C \mu](y)\nonumber\\&=&[\mathbb{P}_T \mathbb{P}_C \mu](B)=\int_X \int_{C(x)} \chi_{B}(T(y))d\theta_x(y)d\mu(x)\label{PTC}.
\end{eqnarray*}

The P-F operators $\mathbb{P}_T$ and $\mathbb{P_C}$ are used to define their restriction, $\mathbb{P}_T^1:{\cal M}({\cal A}^c\times U)\to {\cal M}({\cal A}^c)$, and $\mathbb{P}_C^1:{\cal M}({\cal A}^c)\to {\cal M}({\cal A}^c\times U)$ to the complement of the attractor set respectively, in a way similar to  Eq. (\ref{rest_PF}).
The control Lyapunov measure is defined as any non-negative
measure $\bar \mu\in {\cal M}({\cal A}^c)$,  finite on ${\cal
B}(X\setminus U(\epsilon))$, such that there exists a control mapping $C$ that satisfies following control Lyapunov measure equation
\begin{eqnarray}
\gamma [{\mathbb P}^1_T  {\mathbb P}^1_C \bar \mu ](B)- \bar \mu(B)=-m(B),\label{control_measure}
\end{eqnarray}
for every set $B\in {\cal B}( X\setminus U(\epsilon))$ and $\gamma\geq 1$.
Stabilization of the  attractor set is posed
as a co-design problem of jointly obtaining the control Lyapunov
measure $\bar \mu$ and the control P-F operator ${\mathbb P}_C$ \cite{Vaidya_CLM}.


In the following section we explain how the stabilization framework using Lyapunov measure can be extended to optimization stabilization using Lyapunov measure.

\subsection{Optimal stabilization}\label{optimal_control}

The basic idea behind the optimal stabilization is to augment the  control Lyapunov measure equation (\ref{control_measure}) with a cost function so that the attractor set $\cal A$ is stabilized while minimizing a certain cost. 

We consider the following cost function.
\begin{equation} {\cal
C}_C(B)=\int_{B}\sum_{n=0}^{\infty} \gamma^n G\circ C(x_n)
dm(x),\;\;\;\;\label{cost_sum}
\end{equation}
where $x_0=x$, the cost function $G: Y\to {\mathbb R}$ is assumed a continuous non-negative real-valued function,  such that $G({\cal A},0)=0$, $x_{n+1}=T\circ C(x_n)$, and $0<\gamma<\frac{1}{\beta}$. 
Under the assumption that the controller mapping $C$ renders the attractor set a.e. stable with a geometric decay rate, $\beta<\frac{1}{\gamma}$, the cost function (\ref{cost_sum}) is finite.
In the following we will use the notion of the scalar product between continuous function $h\in {\cal C}^0(X)$ and measure $\mu\in {\cal M}(X)$ as $\left<h,\mu\right>_X :=\int_X h(x)d\mu(x)$ \cite{Lasota}.
 The following theorem proves  the cost of stabilization of the
 set $\cal A$ as given in Eq. (\ref{cost_sum}) can be expressed using the control Lyapunov measure equation.
\begin{theorem}\label{theorem_ocp} Let the controller mapping $C(x)=(x,K(x))$, be such that the attractor set $\cal A$ for the feedback control system $T\circ C:X\to X$ is a.e. stable with geometric decay rate $\beta<1$. Then, the cost function (\ref{cost_sum}) is well defined for $\gamma<\frac{1}{\beta}$ and, furthermore, the cost of stabilization of the attractor set $\cal A$ with respect to Lebesgue almost every initial condition starting from set $B\in {\cal B}(X_1)$ can be expressed as follows:
\begin{eqnarray}
&{\cal C}_C(B)= \int_{B}\sum_{n=0}^{\infty}\gamma^n G\circ C(x_n)
dm(x)\nonumber\\&=\int_{{\cal A}^c\times U}G(y)d[\mathbb{P}_C^1 \bar \mu_B](y)=
\left<G,{\mathbb P}_C^1 \bar \mu_B\right>_{{\cal A}^c\times
U},\label{inequality}
\end{eqnarray}
where $x_0=x$ and $\bar \mu_B$ is the solution of the following  control
Lyapunov measure equation,
\begin{eqnarray}
\gamma \mathbb{P}_T^1\cdot \mathbb{P}_C^1\bar \mu_B(D)-\bar
\mu_B(D)=-m_B(D), \label{control_Ly}
\end{eqnarray}
for all $D\in {\cal B}(X_1)$ and  where $m_B(\cdot):=m(B\cap \cdot)$ is a finite measure
supported on the set $B\in {\cal B}(X_1)$.
\end{theorem}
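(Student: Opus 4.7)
The plan is to express the infinite-sum cost as a Neumann series in the composed transfer operator and then recognize this series as the solution of the control Lyapunov measure equation. The key algebraic tools will be the duality identity $\int h(F^n(x))\,d\mu(x)=\int h(x)\,d[\mathbb{P}_F^n\mu](x)$, the factorization $\mathbb{P}_{T\circ C}=\mathbb{P}_T\mathbb{P}_C$ already derived in the preliminaries, and the geometric decay estimate guaranteed by a.e.\ stability.

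First, I would rewrite a single summand. Using $x_n=(T\circ C)^n(x_0)$ and applying the pullback identity with $F=T\circ C$ to the integrand $G\circ C$, each summand becomes
\begin{equation*}
\int_B \gamma^n G\circ C(x_n)\,dm(x)=\gamma^n\int_X G\circ C(x)\,d[\mathbb{P}_{T\circ C}^n m_B](x).
\end{equation*}
A second application of duality, this time to push $\mathbb{P}_C$ to the left, and the factorization $\mathbb{P}_{T\circ C}=\mathbb{P}_T\mathbb{P}_C$, convert the right-hand side into $\gamma^n\int_Y G(y)\,d[\mathbb{P}_C(\mathbb{P}_T\mathbb{P}_C)^n m_B](y)$.

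Second, I would restrict everything to the complement of the attractor. Since $m_B$ is supported on $B\subset\mathcal{A}^c$, and since $G({\cal A},0)=0$ forces the contributions through $\mathcal{A}$ to vanish, the full operators $\mathbb{P}_T,\mathbb{P}_C$ can be replaced by their sub-stochastic restrictions $\mathbb{P}_T^1,\mathbb{P}_C^1$ defined earlier. This yields the series
\begin{equation*}
{\cal C}_C(B)=\int_{\mathcal{A}^c\times U} G(y)\,d\Bigl[\mathbb{P}_C^1\sum_{n=0}^\infty \gamma^n(\mathbb{P}_T^1\mathbb{P}_C^1)^n m_B\Bigr](y),
\end{equation*}
after exchanging the sum with the integral, which is legitimate by monotone convergence because $G\geq 0$.

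Third, I would identify the bracketed measure as $\bar\mu_B$. A.e.\ stability with geometric decay $\beta<1/\gamma$ implies that the operator norm of $\gamma\,\mathbb{P}_T^1\mathbb{P}_C^1$ acting on measures finite on $\mathcal{B}(X\setminus U(\epsilon))$ contracts like $(\gamma\beta)^n$, so the Neumann series $\bar\mu_B:=\sum_{n=0}^\infty \gamma^n(\mathbb{P}_T^1\mathbb{P}_C^1)^n m_B$ converges in total variation. A direct telescoping check gives $(I-\gamma\mathbb{P}_T^1\mathbb{P}_C^1)\bar\mu_B=m_B$, which is precisely the control Lyapunov measure equation~(\ref{control_Ly}). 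Substituting this back produces ${\cal C}_C(B)=\langle G,\mathbb{P}_C^1\bar\mu_B\rangle_{\mathcal{A}^c\times U}$.

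The main obstacle I anticipate is the careful justification of passing from the unrestricted operator $\mathbb{P}_{T\circ C}$ to the sub-stochastic restriction $\mathbb{P}_T^1\mathbb{P}_C^1$: one must argue that mass leaving $\mathcal{A}^c$ via the dynamics contributes no cost because $G$ vanishes on $\mathcal{A}\times\{0\}$, and simultaneously that the Neumann series converges on the correct space of measures uniformly in $\epsilon$. The finiteness of the cost in~(\ref{cost_sum}) and the geometric decay bound from Definition~\ref{stable_a.e.} together with $\gamma\beta<1$ are exactly what makes both the convergence of the series and the well-posedness of the restriction hold simultaneously.
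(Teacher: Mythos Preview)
Your proposal is correct and follows the standard Neumann-series route for this type of result. Note, however, that the paper does not actually give an in-line proof of this theorem: it simply refers the reader to the cited reference \cite{arvind_ocp_online}. The argument in that reference is precisely the one you outline---push each summand through the duality identity to obtain $\gamma^n\langle G,\mathbb{P}_C^1(\mathbb{P}_T^1\mathbb{P}_C^1)^n m_B\rangle$, sum the resulting geometric series using $\gamma\beta<1$, and identify the limit with the unique solution of the control Lyapunov measure equation via $(I-\gamma\mathbb{P}_T^1\mathbb{P}_C^1)^{-1}m_B=\bar\mu_B$. The obstacle you flag about replacing $\mathbb{P}_{T\circ C}$ by its sub-stochastic restriction is the only delicate point, and it is handled exactly as you suggest: forward invariance of $\mathcal{A}$ together with $G(\mathcal{A},0)=0$ (and implicitly $K|_{\mathcal{A}}=0$, since $\mathcal{A}$ is already invariant for the closed loop) ensures that mass absorbed into $\mathcal{A}$ contributes nothing to the cost.
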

\begin{proof}
Refer to \cite{arvind_ocp_online} (Theorem \ref{theorem_ocp}) for the proof.
\end{proof}
 By appropriately selecting the measure on the right-hand side of the control Lyapunov measure equation (\ref{control_Ly}) (i.e., $m_B$),
stabilization of the attractor set with respect to
a.e. initial conditions starting from a particular set can
be studied.
The minimum cost of stabilization is defined as the minimum over
all a.e. stabilizing controller mappings $C$ with a geometric decay as follows:
\begin{equation}
{\cal C}^{*}(B)=\min_{C}{\cal C}_C(B).
\end{equation}
Using (\ref{inequality}) and (\ref{control_Ly}) the infinite dimensional linear program for optimal stabilization can be written as follows. We first define the projection map,
$P_1: {\cal A}^c\times U\rightarrow {\cal A}^c$
as:
$P_1(x,u)=x,$
and denote the P-F operator corresponding to $P_1$ as
$\mathbb{P}_{P_1}:  {\cal M}({\cal A}^c\times U)\rightarrow
{\cal M}({\cal A}^c)$, which can be written as
$[{\mathbb P}^1_{P_1} \theta](D)=\int_{{\cal A}^c\times U}\chi_D(P_1(y))d\theta(y)=\int_{D\times U}d\theta(y)=\mu(D)$.
 Using this definition of projection mapping $P_1$ and the
corresponding P-F operator, we can write the linear program for the optimal stabilization of set $B$
with  unknown variable
$\theta$ as follows:
\begin{eqnarray}
&\min\limits_{\theta\geq 0} \left<G, \theta\right>_{{\cal A}^c \times U},\nonumber\\
&\mbox{s.t. } \gamma [{\mathbb P}^1_T \theta](D)-[{\mathbb P}^1_{P_1}
\theta](D)=-m_B(D)\label{linear_program},
\end{eqnarray}
for $D\in {\cal B}(X_1)$.


\section{Data-Driven Approach for Optimal Stabilization
}\label{computation}

The computational framework relies on the finite dimensional approximation of the transfer P-F operator which is used in the approximation of infinite dimensional linear program for optimal stabilization. For the finite dimensional approximation of P-F operator from time-series data, we use Naturally Structured Dynamics Mode Decomposition (NSDMD) algorithm \cite{Umesh_NSDMD}. One of the distinguishing feature of this algorithm as compared to other algorithms available for finite dimensional approximation of Koopman and then P-F operator is it preserves two natural properties of these transfer operators namely positivity and Markov property. These properties are essential in the formulation of optimal stabilization problem as a linear program. In fact, in the absence of these properties, the optimal control problem for a nonlinear system using transfer operator framework cannot be formulated as a linear program. In the following, we briefly describe the NSDMD algorithm and then present the finite dimensional approximation of the linear program. 

\subsection{Naturally Structured Dynamic Mode Decomposition}\label{section_nsdmd}
The  Koopman operator corresponding to dynamical system (\ref{system}) is defined as
\[[\mathbb{U}h](x)=h(F(x)),
\] where $h\in {\cal C}^0(X)$. The Koopman and P-F operators are dual to each other and the duality is expressed as follows \footnote{With some abuse of notation we are using the same notation to define the P-F operator acting on the space of functions and measures}
\begin{eqnarray}\left<\mathbb{U}h,g\right>&=&\int_X[\mathbb{U}h](x)g(x)dx\nonumber\\
&=&\int_X h(x)[\mathbb{P}g](x)dx= \left<h,\mathbb{P}g\right>,
\end{eqnarray}
where $h\in{\cal L}_\infty(X)$ and $g\in {\cal L}_1(X)$ and the P-F operator on the space of densities are defined as follows: 
\begin{eqnarray}
[\mathbb{P}g](x)=g(F^{-1}(x))|\frac{dF^{-1}(x)}{dx}| \nonumber
\end{eqnarray}
Furthermore, these two operators also satisfy positivity property i.e., for any $h\geq 0$ and $g\geq 0$, we have $\mathbb{U}h\geq 0$ and $\mathbb{P}g\geq 0$. Another important property the P-F operator satisfies is the Markov property 
\[\int_X [\mathbb{P}g](x)d\mu(x)=\int_X g(x)d\mu(x),\]
where $\mathbb{P}: {\cal L}_1(X,\mu)\to {\cal L}_1(X,\mu) $ and $\mu$ is not necessarily invariant probability measure. 
The NSDMD algorithm approximate Koopman operator while preserving the positivity property. Furthermore, the duality between the Koopman and P-F operator combined with the Markov property of the P-F operator is exploited to provide data-driven approximation of P-F operator from the Koopman operator.  Hence  the NSDMD algorithm can be viewed as Extended Dynamic Mode Decomposition (EDMD) with added constraints to ensure positivity and Markov property. For the finite dimensional approximation, let $X=[x_1,\ldots,x_L]$ be the time-series data and \[{\bf \Psi}(x)=[\psi_1(x),\ldots, \psi_K(x)]^\top\]
as the choice of dictionary functions.
\begin{assumption}\label{assume}We assume that $\psi_j(x)\geq 0$ for $j=1,\ldots, K$ and define 
\begin{eqnarray}[\Lambda]_{ij}=\int_X \psi_i(x)\psi_j(x)dx.\label{lambda}
\\\nonumber
\end{eqnarray}
\end{assumption}

\begin{remark}\label{remark_grb}
In the simulation section we assume the dictionary functions to be Gaussian radial basis function for ensuring positivity of dictionary functions. The matrix $\Lambda$ in Eq. (\ref{lambda}) can be computed explicitly. 
\end{remark}
Under Assumption \ref{assume}, the finite dimensional approximation of Koopman operator ${K}\in \mathbb{R}^{K\times K}$,  and P-F operator $P\in \mathbb{R}^{K\times K}$, can be formulated as following optimization problem

\begin{eqnarray}\label{optimization_problem}
&\min\limits_{ K} \parallel {\bf G}{K}-{\bf A}\parallel_F\\\nonumber
&\text{s.t.} \;{K}_{ij} \geq 0,\;\;\;{\rm (Koopman \;positive\; constraints)}\\\nonumber
& [{\Lambda { K}\Lambda^{-1}}]_{ij}\geq 0,\;\;\;{\rm (P-F \;positive\; constraints)}\\\nonumber
& \Lambda{ K}\Lambda^{-1}{\bf 1} = {\bf 1},\;\;\;{\rm (P-F\; Markov\; constraints)}
\end{eqnarray}
where $\bf G$ and $\bf A$ are defined as follows:

\begin{eqnarray}\label{edmd2}
&&{\bf G}=\frac{1}{L}\sum_{m=1}^L \boldsymbol{\Psi}({x}_m)^\top \boldsymbol{\Psi}({x}_m)\nonumber\\
&&{\bf A}=\frac{1}{L}\sum_{m=1}^L \boldsymbol{\Psi}({x}_m)^\top \boldsymbol{\Psi}({y}_m).
\end{eqnarray}
and $\bf 1$ is the vector of all ones. The P-F operator $P$ is given by $P=\Lambda^{-1}K^\top \Lambda$.

\subsection{Finite Dimensional Approximation of Linear Program for Optimal Stabilization}

The finite dimensional approximation of the P-F operator can now be used in the finite dimensional approximation of the linear program in Eq. (\ref{linear_program}) for optimal stabilization. Towards this goal, we first discretize the control set $U$. The control input is quantized and  
assumed to take only finitely many control values from the
quantized set
${\cal U}_M = \{u^1,\hdots,u^a,\hdots,u^M\}$,
where $u^a \in \R^d$. For each fixed value of control input $u=u^a$, time-series data $\{x_1^a,\ldots, x_L^a\}$ for $a=1,\ldots, M$ is generated and the finite dimensional approximation of the P-F operator is constructed using the NSDMD algorithm outlined in section \ref{section_nsdmd}. We denote the P-F operator approximated for fixed value of control input $u=u^a$ as $P_a$. For the finite dimensional approximation of the infinite dimensional linear program we need to approximate the cost function $G$ and the measure $\theta$. Following Remark \ref{remark_grb} the centers for the Gaussian radial basis function are generated using K-mean clustering on data set generated from uncontrolled dynamical system. Let $x_\ell^*$ for $\ell=1,\ldots, K$ be the centers of the Gaussian radial basis functions. The finite dimensional approximation of the cost function is then expressed as  $G(x_\ell^*,u^a)$ for $\ell=1,\ldots, K$ and $a=1,\ldots, M$. Let $G_a=[G(x_1^*,u^a),\ldots, G(x_K^*,u^a)]^\top \in \mathbb{R}^K$ and $\theta_a\in \mathbb{R}^K$ be the finite dimensional approximation of measure $\theta$ on $X\times U$. The matrix representation of $\theta$ has $K$ rows and $M$ columns i.e., $\theta\in \mathbb{R}^{K\times M}$ \footnote{With some abuse of notations we are denoting both the infinite and finite dimensional representation of $\theta$ with same notation.}. The $(j,a)$ entry of $\theta$ is denoted by $\theta_{a}^j$ and we use the notation $\theta_a$ and $\theta^j$ for the $a^{th}$ column and $j^{th}$ row of $\theta$ respectively.  

Without loss of generality, we assume that the dictionary function $\psi_1(x)$ with center at $x_1^*$ is supported on the equilibrium point or the attractor set that we want to stabilize. Under this assumption, let $P_a^1\in \mathbb{R}^{(K-1)\times (K-1)}$ be the P-F matrix obtained from $P_a$  after deleting the first row and first column. Similarly, let $\bar \theta\in \mathbb{R}^{(K-1)\times M}$ be the matrix obtained from $\theta$ after deleting the first row.  
 $ G_a^1=[G(x_2^*,u^a),\ldots, G(x_K^*,u^a)]^\top\in \mathbb{R}^{K-1}$ is the vector obtained by deleting the first entry from vector $G_a$. The finite dimensional approximation of the infinite dimensional linear program (\ref{linear_program}) can then be written as follows:

\begin{eqnarray}
&\min\limits_{\bar \theta_a\geq 0} \;\;
\sum_{a=1}^M (G_a^1)^\top \bar \theta_a,\nonumber\\
&\mbox{s.t. } \gamma\sum_{a=1}^{M} (P_a^1)^\top\bar \theta_a
 - \sum_{a=1}^M\bar \theta_a = -m,\;\sum_a \bar \theta_a=\bf 1, 
\;\;\;\label{lp_finite}
\end{eqnarray}
\\
where $\bf 1$ is a vector of all ones and inequality $\bar \theta_a\geq 0$ is element-wise. The optimization problem (\ref{lp_finite}) is a finite dimensional linear program in terms of variable $\theta_a^1$. The solution to the optimization problem in general lead to a stochastic vector $\bar \theta^j$. The row vector $\bar\theta^j$ has a physical significance. In particular, $\bar \theta_a^j$ determines the probability of choosing the control action $a$ with state corresponding to the dictionary function $\psi_j(x)$. But, we are interested in determining deterministic control action i.e., 
\[\bar \theta^j_a=1 \;{\rm for\; exactly\; one} \;a\in \{1,\ldots,M\}. \]  
However, introducing this binary constraints on the entries of $\bar \theta^j$ in the optimization problem (\ref{lp_finite}) will lead to non-convex formulation which is difficult to solve.  Again, we know that deterministic control action can be obtained from stochastic $\bar \theta^j$ vector \cite{raghunathan2014optimal}. In particular, following choice of deterministic feedback control can be made from stochastic $\bar \theta^j$.
Let $\bar \theta_{a*}^j=\max \{\bar \theta_{1}^j,\ldots, \bar \theta_{M}^j\}$ i.e. $a*$ is the index corresponding to the maximum entry from the vector $\bar \theta^j$. Then the optimal deterministic feedback control is given by \[u^{a(j)}=u^{a*}. \]
At this point, the  optimal feedback control $k(x)$ is given by the following formula
\begin{eqnarray}
k(x)=\sum_{\ell=1}^K u^{a(j)}\psi_j(x).\nonumber
\end{eqnarray}



\section{Simulation results}
In this section we provide results of the data-driven optimal stabilization algorithm applied into one-dimensional and two-dimensional continuous and discrete time nonlinear systems. Results are obtained using YALMIP with GUROBI solver coded in MATLAB.\\
\underline{\it Cubic Logistic Map}\\
Controlled equation for cubic logistic map is given as follows:
\begin{eqnarray}
x_{n+1} = \lambda x_n  - x_n^3 + u_n
\end{eqnarray}
where $x_n \in [-1.6,1.6]$ is the state, $u_n$ is the control input and we chose
parameter $\lambda = 2.3$. Let, control input space is quantized to $[-0.2:0.02:0.2]$. For the finite dimensional approximation of the P-F operator, we chose $200$ Gaussian radial basis functions  as dictionary function with $\sigma=0.008$. The cost function is assumed to be $x^2+u^2$.





\begin{figure}[]
\begin{center}
\includegraphics[scale = 0.29]{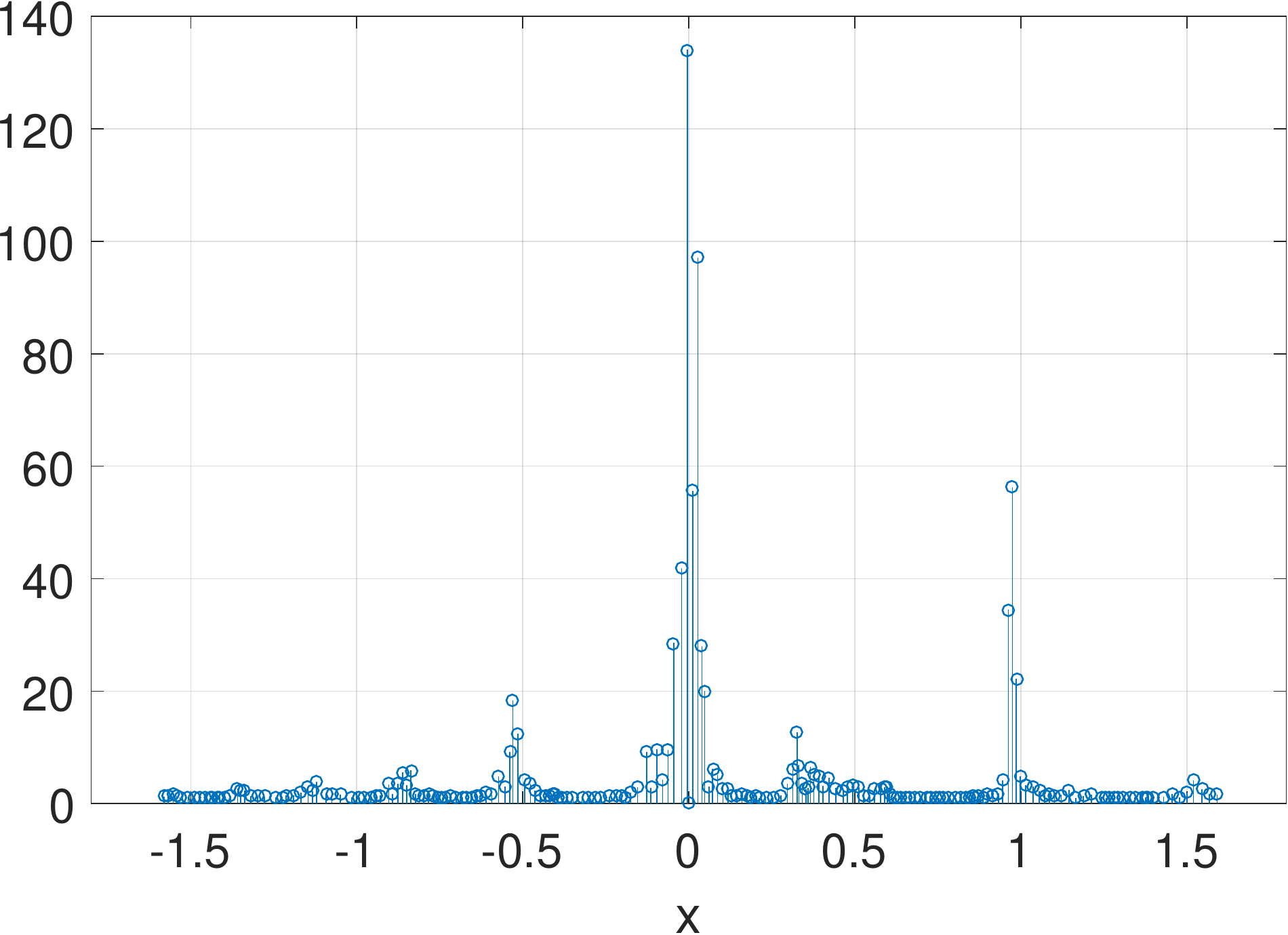}
\caption{Lyapunov Measure for Cubic Logistic Map}
\label{Fig2_cubiclogistic}
\end{center}
\end{figure}

In Fig. \ref{Fig2_cubiclogistic}, we provide the Lyapunov measure plot verifying the stability of the closed loop system. Fig. \ref{Fig1_cubiclogistic} shows two sample trajectories for the open loop and closed loop logistic maps. We observe that closed-loop trajectories are perfectly stabilized to the only equilibrium point at origin within few time steps.

\begin{figure}[]
\begin{center}
\includegraphics[scale = 0.36]{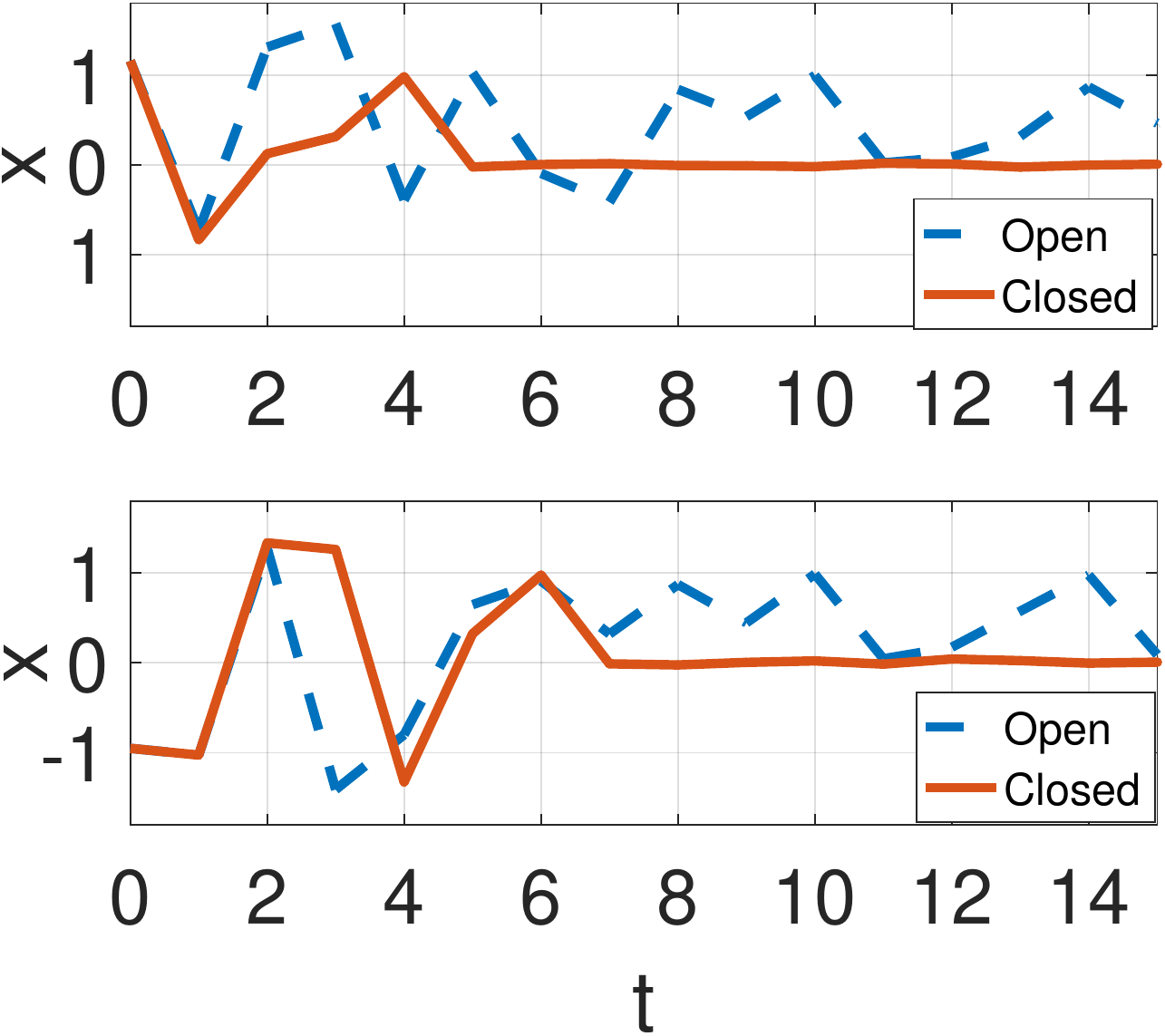}
\caption{Cubic Logistic Map: open loop and closed loop trajectories}
\label{Fig1_cubiclogistic}
\end{center}
\end{figure}

\underline{\it Duffing Oscillator}\\
The control of duffing oscillator is described by following equations
\begin{align}
\begin{split}
\dot{x}_1  &=   x_2 \\
\dot{x}_2  &=   (x_1-x_1^3) - 0.5 x_2 + u
\end{split}
\label{system_dynamics_duffingOsc}
\end{align}

\begin{figure}[]
\begin{center}
\includegraphics[scale = 0.4]{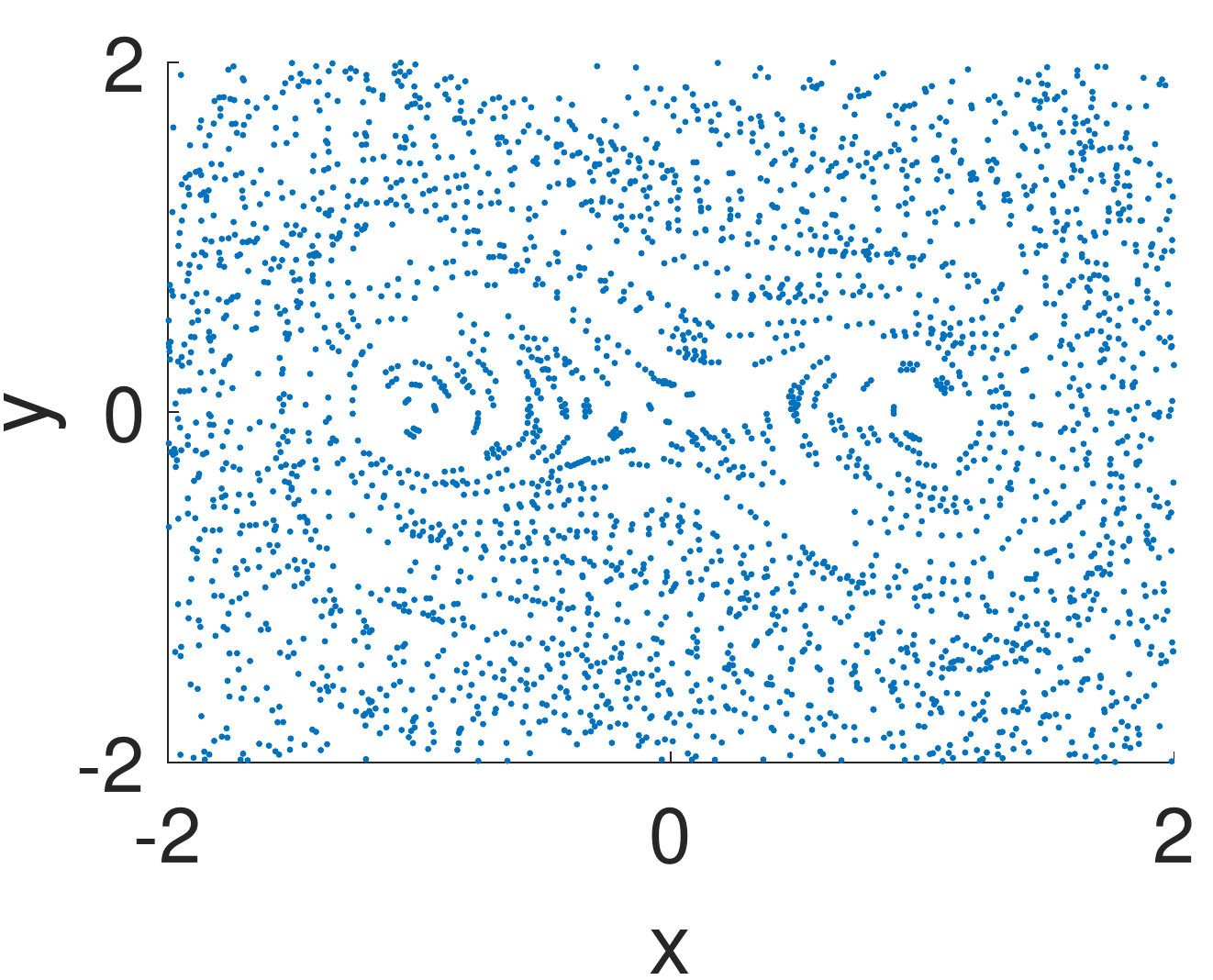}
\caption{Data for approximating transfer operator}
\label{Fig:duffing_data}
\end{center}
\end{figure}

The system has unstable equilibrium point at the origin and two stable equilibrium point at $(\pm 1,0)$. The objective is to stabilize the unstable equilibrium point at the origin. We consider the state space $X=[-2,2]\times [-2,2]$. For the finite dimensional approximation we use $100$ Gaussian radial basis function with $\sigma=0.2$. The centers for the radial basis functions are chosen using K-mean clustering algorithm applied to data set generated for open loop system and as shown in Fig. \ref{Fig:duffing_data}. The control input $u$ is quantized to ${\cal U} = [-4:0.5:4]$. In Figs. \ref{Fig1_duffing} and \ref{Fig2_duffing} we show the plots for the open loop and closed loop trajectories along with optimal cost and control inputs.


\begin{figure}[]
\begin{center}
 \centering
  \subcaptionbox{\label{duffos_traj420}}{\includegraphics[width=1.5in,height=1.2in]{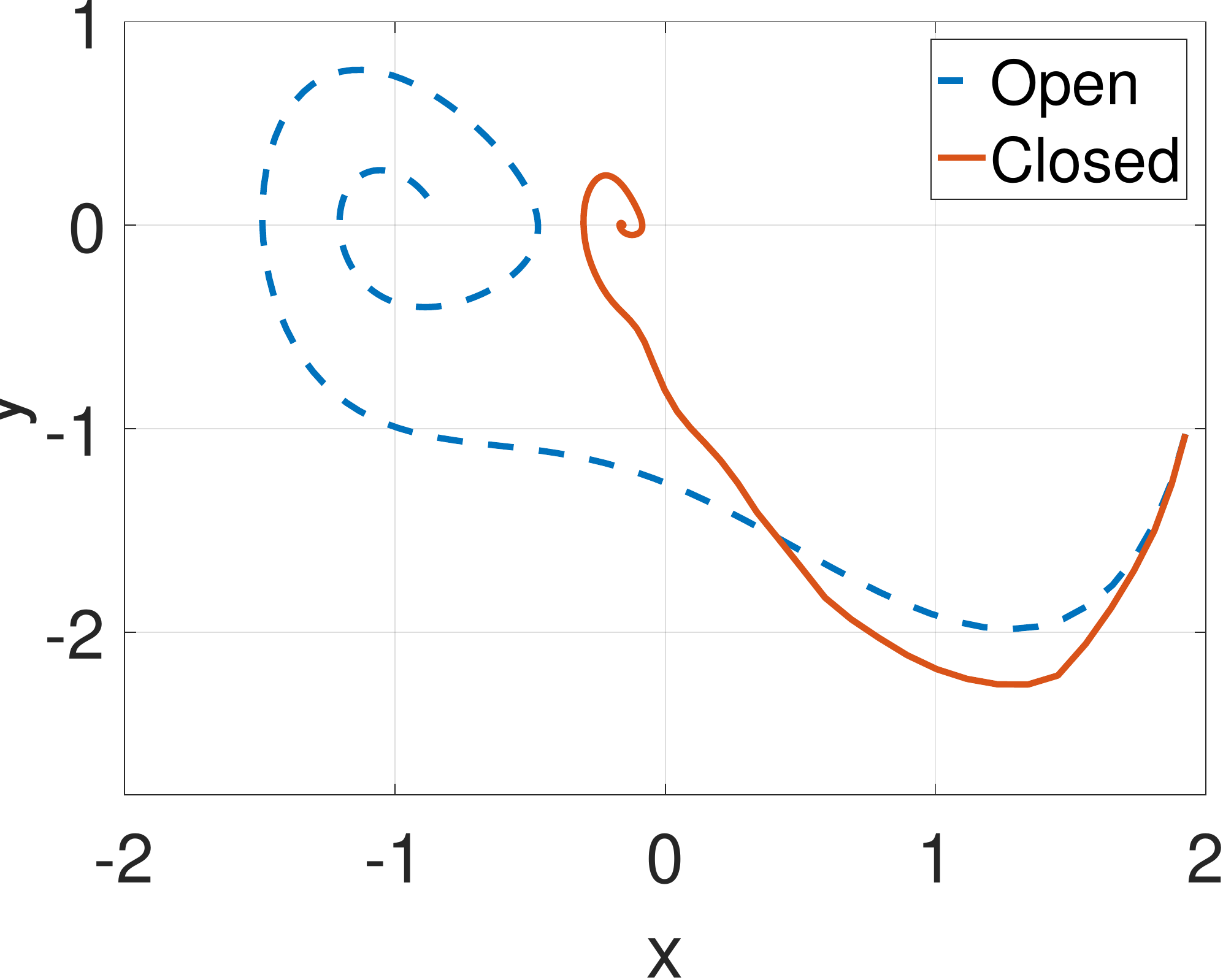}}\hspace{1em}%
  \subcaptionbox{\label{duffos_control_cost420}}{\includegraphics[width=1.5in,height=1.2in]{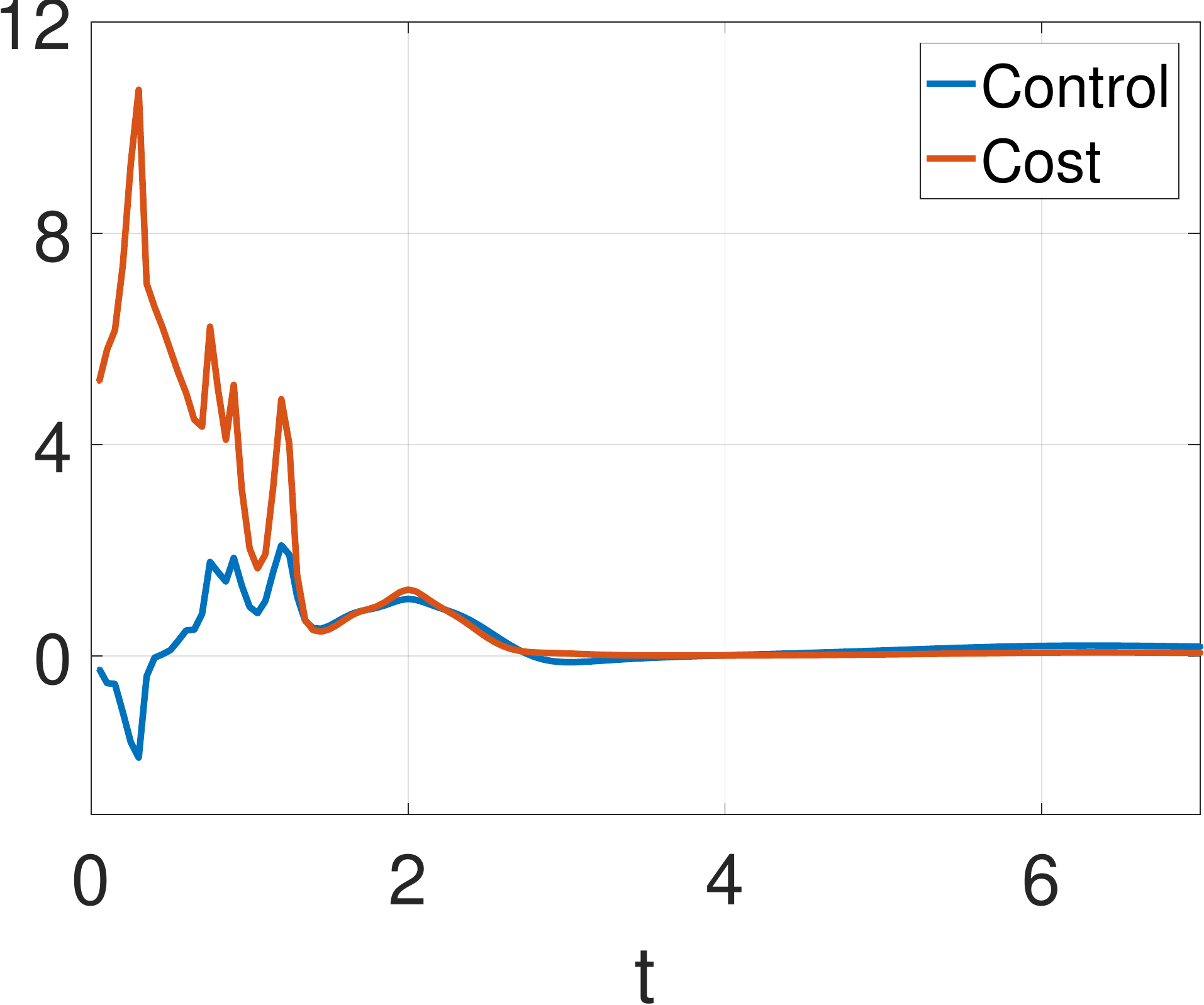}}
\caption{Duffing Oscillator: a) Open loop and closed loop trajectories; b) Optimal cost and control values.}
\label{Fig1_duffing}
\end{center}
\end{figure}

\begin{figure}[]
\begin{center}
 \centering
  \subcaptionbox{\label{duffos_traj712}}{\includegraphics[width=1.5in,height=1.2in]{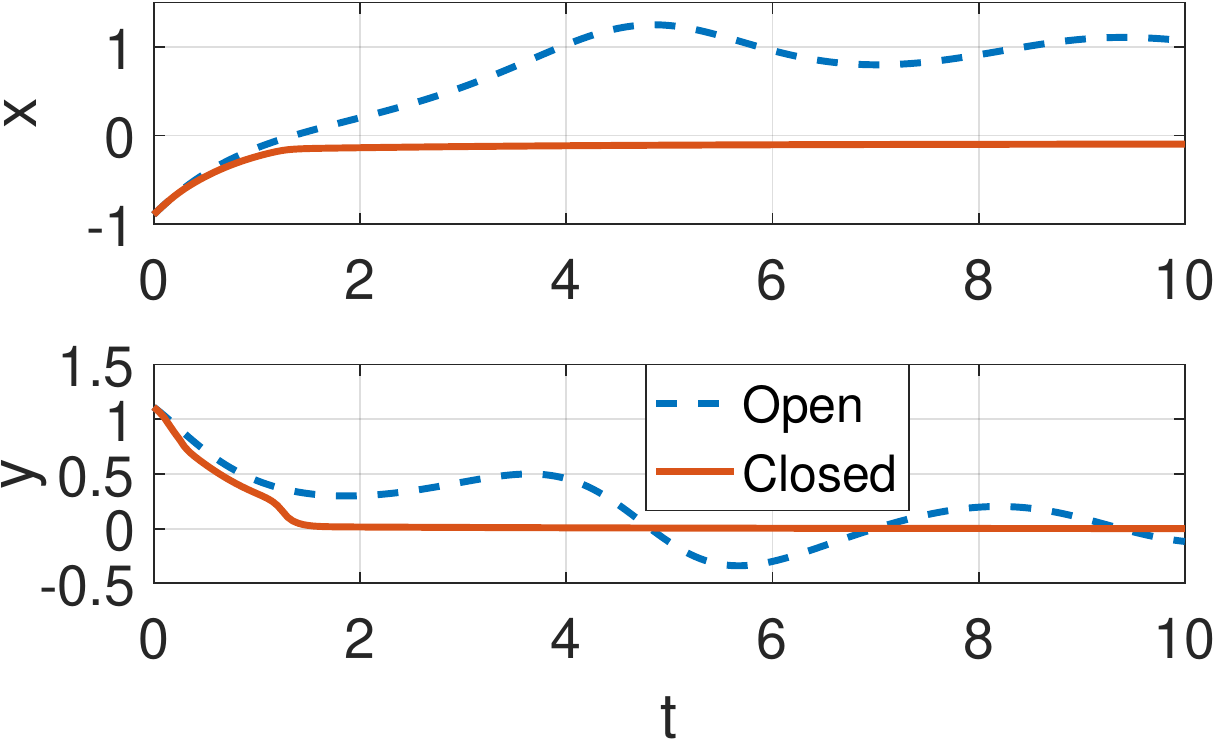}}\hspace{1em}%
  \subcaptionbox{\label{duffos_control_cost712}}{\includegraphics[width=1.5in,height=1.2in]{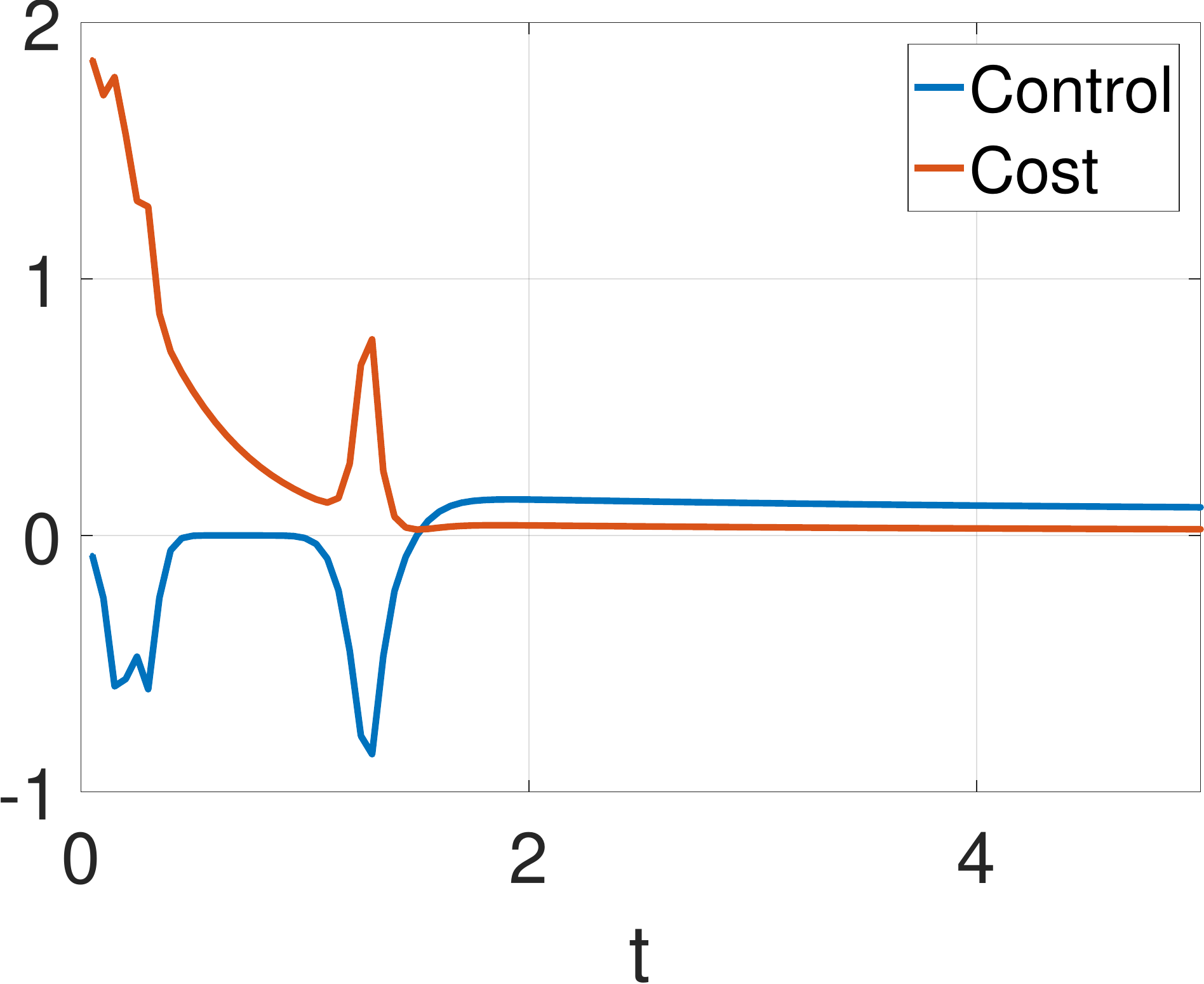}}
\caption{Duffing Oscillator: a) Open loop and closed loop trajectories; b) Optimal cost and control values.}
\label{Fig2_duffing}
\end{center}
\end{figure}


\underline{\it Basin Hopping in a Double Well}
\begin{eqnarray}
\dot{x_1}  &=& x_2 \nonumber \\
\dot{x_2}  &=&   -x_1^3 + ax_1^2 + x_1 - a + u\label{system_dynamics_well}
\end{eqnarray}
For parameter value of $a=0.5$, the system has three equilibrium points at $(\pm 1,0)$ and $(a,0)$. The equilibrium points at $(\pm 1,0)$ are stable and $(a,0)$ is unstable. The objective is to stabilize the unstable equilibrium point at $(a,0)$. Control quantization used for this example is ${\cal U}=[-2:0.2:2]$. For the finite dimension approximation, we construct $100$ Gaussian radial basis functions with $\sigma=0.22$.
Using the designed control, the intended unstable equilibrium was successfully stabilized for almost all the initial conditions. In figures \ref{Fig1_doublewell} and \ref{Fig2_doublewell}, we compare the open loop and close loop sample trajectories starting from two different initial conditions and corresponding optimal cost and control inputs.
\begin{figure}[]
\begin{center}
 \centering
  \subcaptionbox{\label{bhdwell_traj154}}{\includegraphics[width=1.57in,height=1.22in]{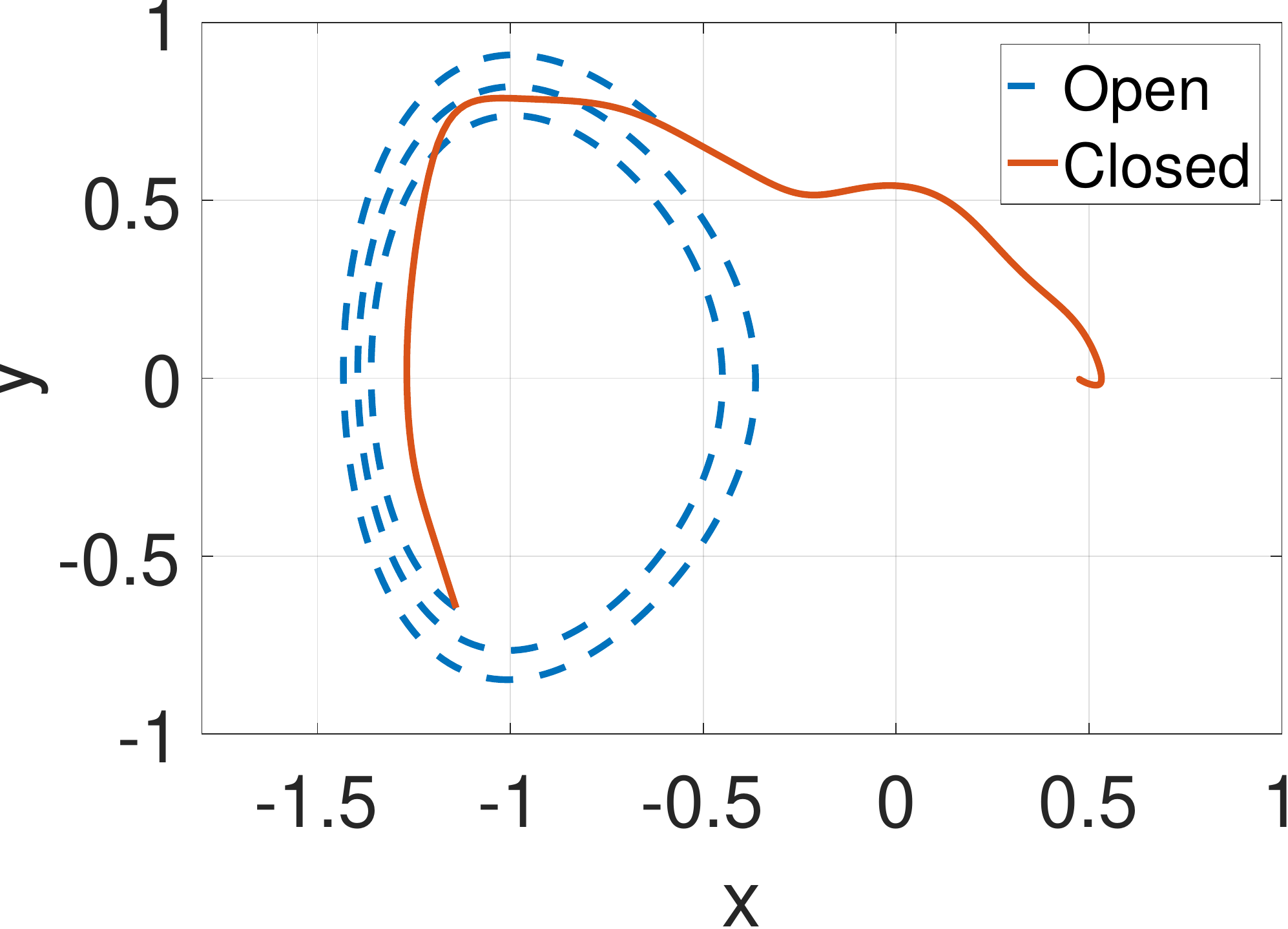}}\hspace{1em}%
 \subcaptionbox{\label{duffos_control_cost154}}{\includegraphics[width=1.57in,height=1.2in]{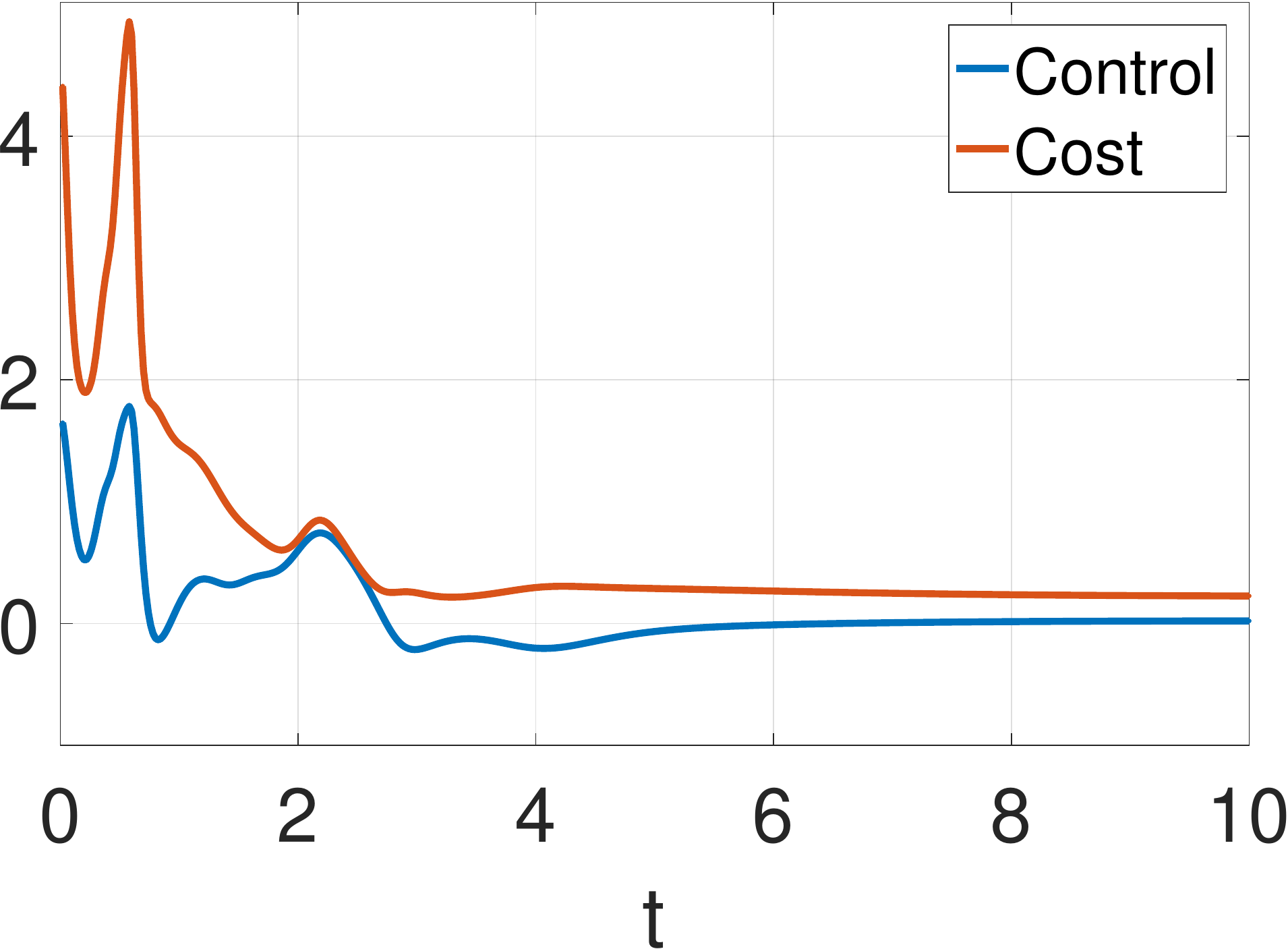}}
\caption{Basin Hopping Double Well: a) Open-loop and closed-loop trajectories; b) Optimal cost and control inputs.}
\label{Fig1_doublewell}
\end{center}
\end{figure}
%
\begin{figure}[]
\begin{center}
 \centering
  \subcaptionbox{\label{bhdwell_traj655}}{\includegraphics[width=1.57in,height=1.22in]{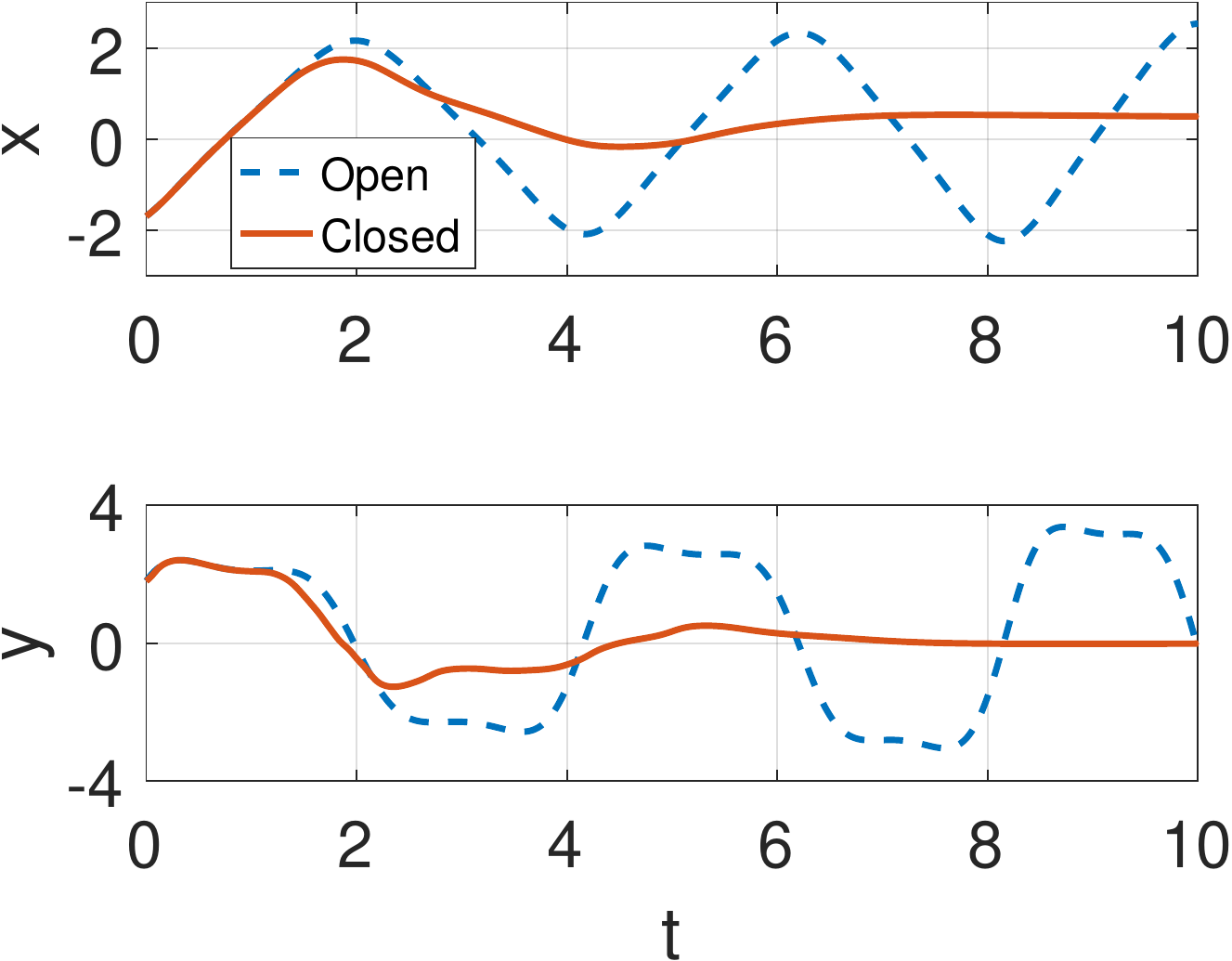}}\hspace{1em}%
 \subcaptionbox{\label{bhdwell_control_cost655}}{\includegraphics[width=1.57in,height=1.2in]{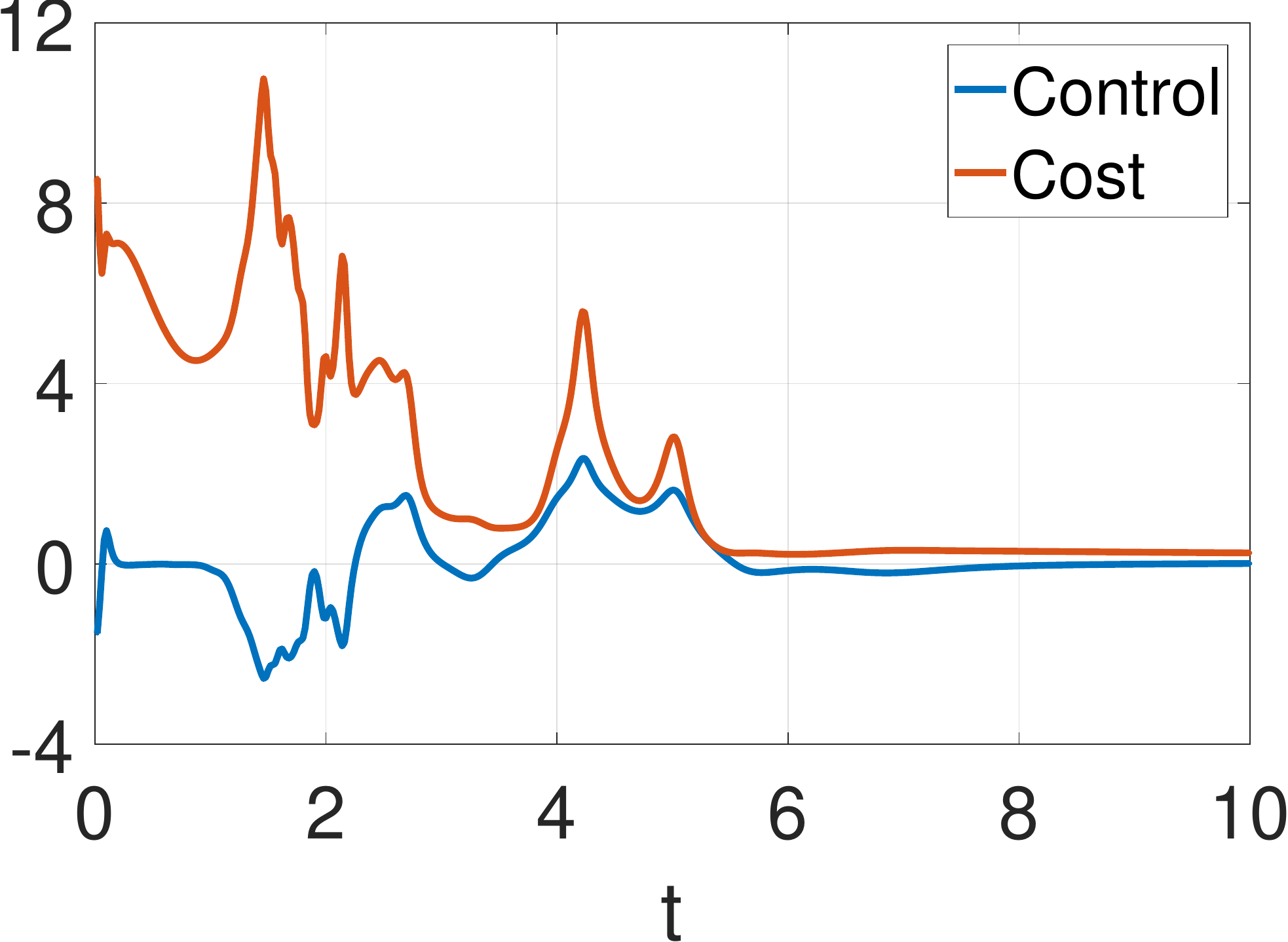}}
\caption{Basin Hopping Double Well: a) Open loop and closed loop trajectories; b) Optimal cost and control inputs.}
\label{Fig2_doublewell}
\end{center}
\end{figure}

\underline{\it Standard Map}
\begin{eqnarray}
x_{n+1}  &=&  x_n + y_n + K u \sin2\pi x_n \;\;(\mod 1)\nonumber \\
y_{n+1}  &=&  y_n + K u \sin 2\pi x_n \;\;
\label{system_standard_map}
\end{eqnarray}

Standard Map is one of the classical example of system exhibiting complex dynamics. The states of the standard map are canonical action-angle coordinates and they arise as a discretization of $1\frac{1}{2}$ degree of freedom Hamiltonian system. Control of standard maps are studied in \cite{vaidya2004controllability}.  For the uncontrolled standard map the entire state space $(x,y)\in [0,1]\times [0,1]$ is foliated with periodic and quasi periodic motion. The control objective is to stabilize the period 2 orbit located at $(0.25,0.5)$ and $(0.75,0.5)$. The parameter value of $K$ is chosen to be equal to $0.25$. For finite dimensional approximation, we used $200$ Gaussian radial basis  functions with $\sigma=0.02$. Control is quantized to ${\cal U}=[0.5:0.02:0.5]$. Open loop and closed loop control trajectories for the stabilization of period two orbit is shown in Fig. \ref{Fig1_standardmap}(a) with corresponding optimal control value is shown in Fig. \ref{Fig1_standardmap}(b). The stabilization of period two orbit is also evident from Fig. \ref{Fig3_standardmap}(a), where along the $x$ direction the system trajectory toggle between two points $x=0.25$ and $x=0.75$ and along $y$ axis the trajectory stabilize to $y=0.5$.

\begin{figure}[]
\begin{center}
 \centering
  \subcaptionbox{\label{stdmap_traj482_grid}}{\includegraphics[width=1.57in,height=1.22in]{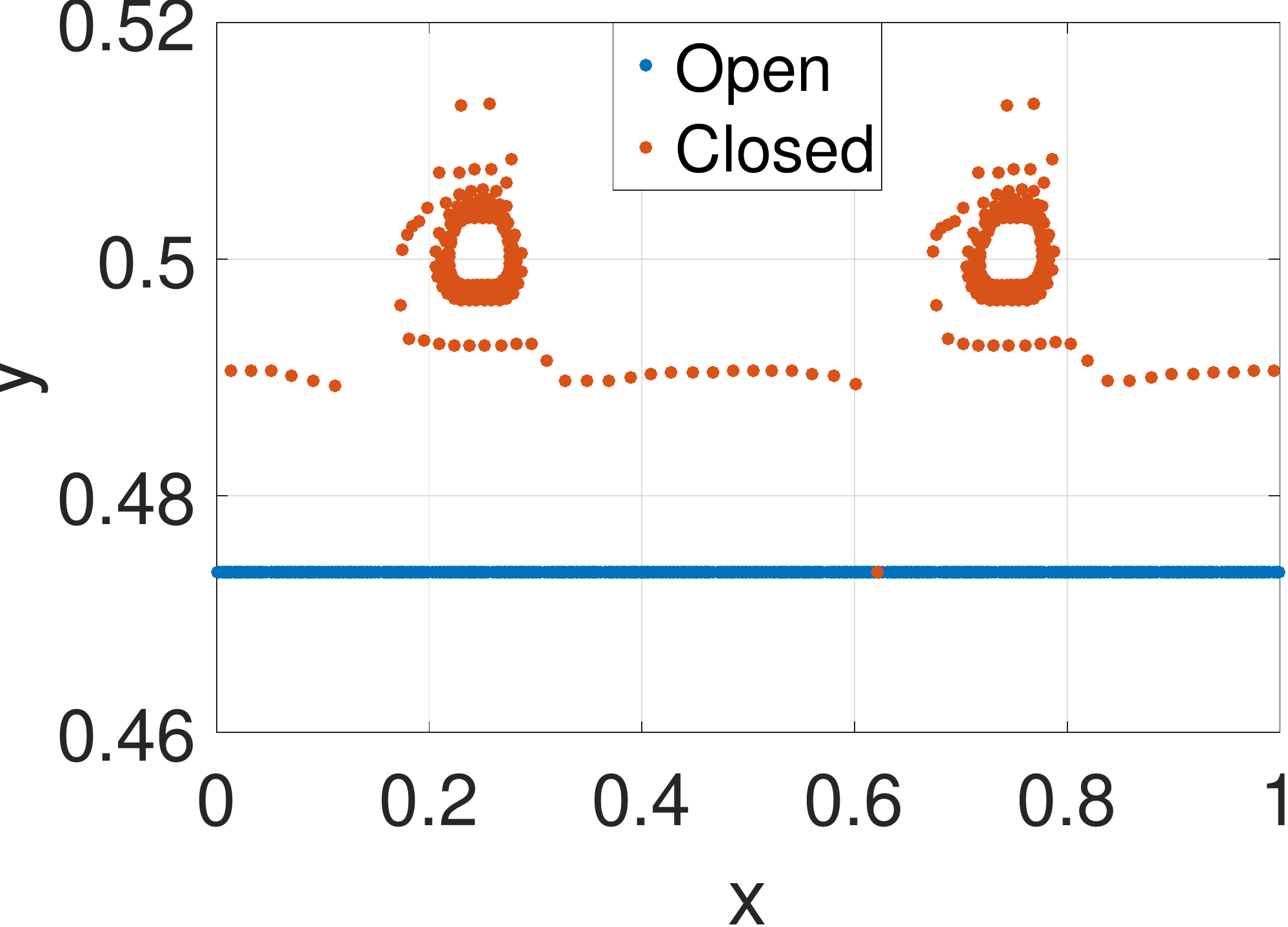}}\hspace{1em}%
 \subcaptionbox{\label{stdmap_control482_grid}}{\includegraphics[width=1.57in,height=1.22in]{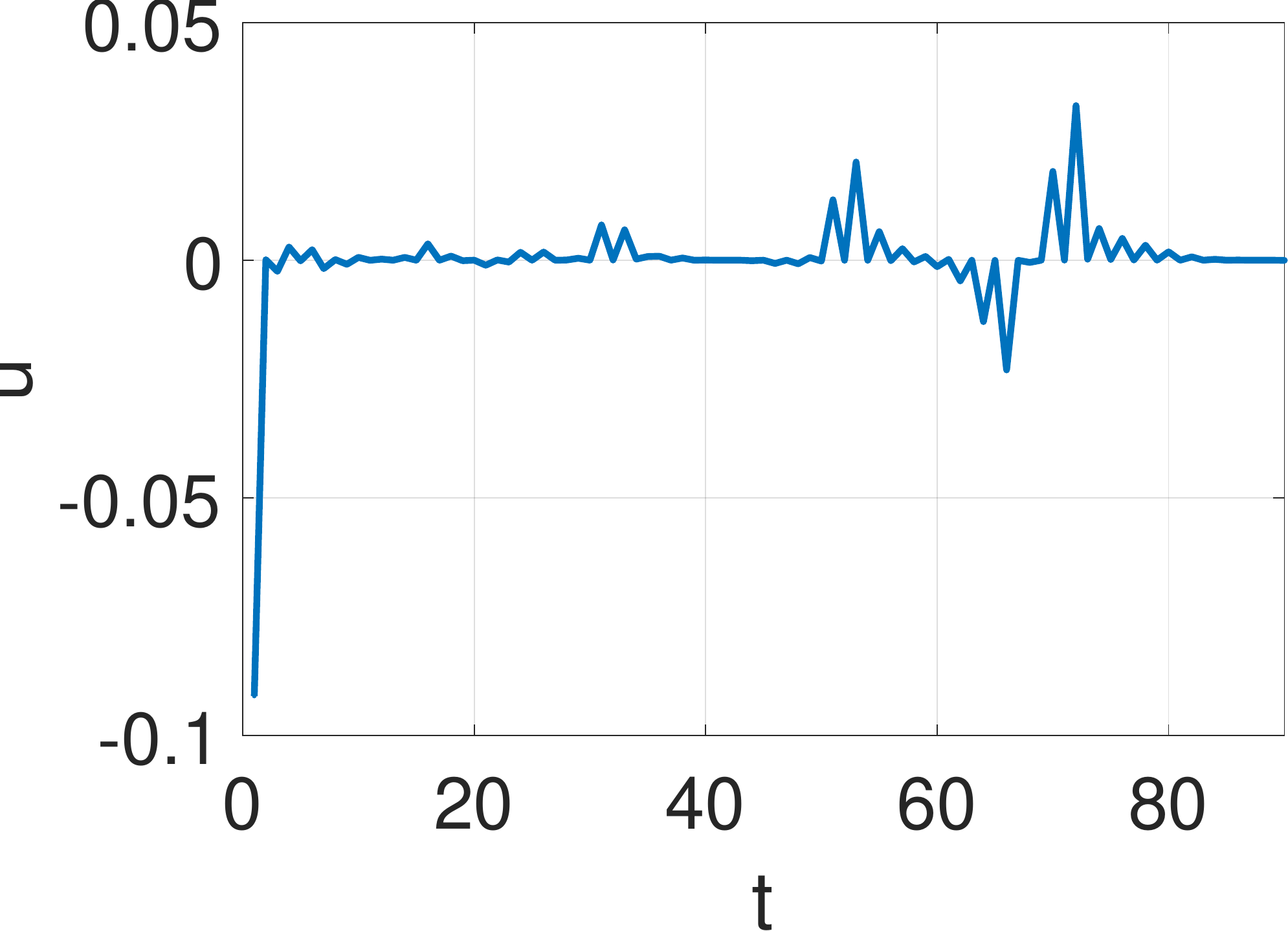}}
\caption{Period 2-orbit stabilization for standard map: a) Open-loop and closed loop trajectories; b) Optimal control value.}
\label{Fig1_standardmap}
\end{center}
\end{figure}


\begin{figure}[]
\begin{center}
 \centering
  \subcaptionbox{\label{stdmap_traj_states1863_grid}}{\includegraphics[width=1.57in,height=1.22in]{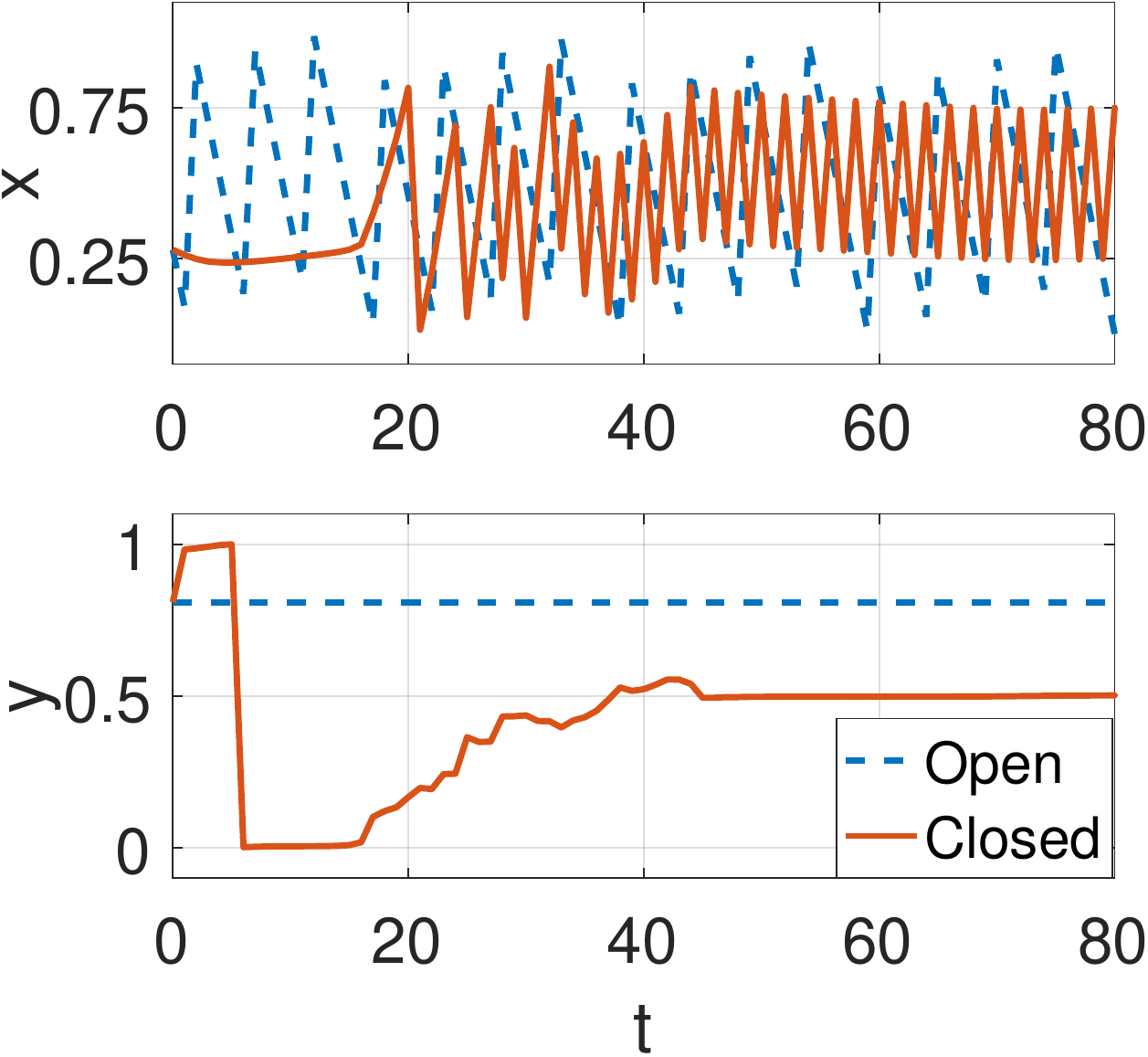}}\hspace{1em}%
 \subcaptionbox{\label{stdmap_control482_grid}}{\includegraphics[width=1.57in,height=1.22in]{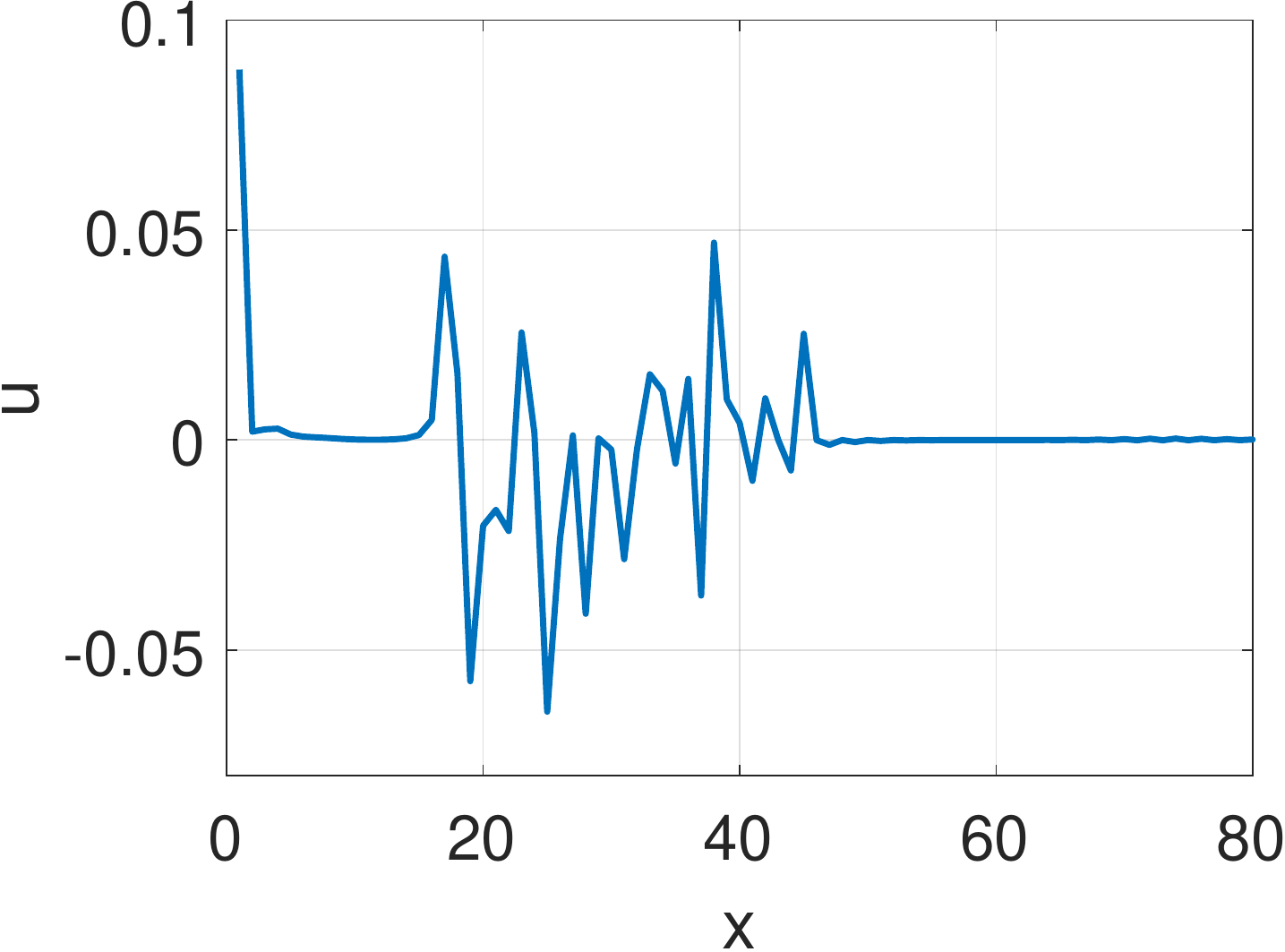}}
\caption{Period 2-orbit stabilization for standard map: a) Open-loop and closed loop trajectories; b) Optimal control value.}
\label{Fig3_standardmap}
\end{center}
\end{figure}

\section{Conclusions}
The main contribution of this paper is to utilize time-series data trajectory from a nonlinear system in order to provide {\it linear programming}-based approach for optimal stabilization of an attractor set. The proposed method relies on a linear transfer operator theoretic framework for a linear representation of a nonlinear system and the design of optimal stabilizing feedback controller. We use Naturally Structured Dynamic Mode Decomposition (NSDMD) algorithm for the finite-dimensional approximation of the transfer Koopman and then P-F operator from time series data.  The finite-dimensional approximation of the P-F operator is employed for the optimal stabilization of equilibrium point and periodic orbit using the Lyapunov measure-based optimal stabilization algorithm developed in \cite{raghunathan2014optimal}.

\bibliographystyle{ieeetr}
\bibliography{ref}
\end{document}